\newtheorem{theorem}{Theorem}[section]
\newtheorem{lemma}[theorem]{Lemma}
\def\squarebox#1{\hbox to #1{\hfill\vbox to #1{\vfill}}}
\def\qed{\hspace*{\fill}%
        \vbox{\hrule\hbox{\vrule\squarebox{.667em}\vrule}\hrule}\smallskip}
\newenvironment{proof}{\begin{trivlist}
\item[\hspace{\labelsep}{\em\noindent Proof.~}]}{\qed\end{trivlist}}
\begin{document}

\title{
	A tight analysis of Kierstead-Trotter algorithm \\for online unit interval coloring
}

\author{
	Tetsuya Araki, 
	Koji M. Kobayashi
}

\date{}

\maketitle

\begin{abstract}
	Kierstead and Trotter (Congressus Numerantium 33, 1981) proved that 
	their algorithm is an optimal online algorithm for the online interval coloring problem. 
	In this paper, 
	for online unit interval coloring, 
	we show that the number of colors used by the Kierstead-Trotter algorithm is at most $3 \omega(G) - 3$, 
	where $\omega(G)$ is the size of the maximum clique in a given graph $G$, 
	and it is the best possible. 
\end{abstract}

\section{Introduction} \label{Intro}

The online interval coloring problem has been extensively studied for many years. 
Research on this problem is motivated by applications 
such as resource allocation in communication networks. 
Online interval coloring is defined as follows: 
given any interval graph $G = (V, E)$,  
an online algorithm does not have any information on $G$ at first. 
Intervals of $G$ are revealed to the online algorithm one by one over time 
(the length of an interval is one for the unit interval coloring, 
which we study in this paper). 
The algorithm must assign a color to a revealed interval before the next one is revealed such that any two intersecting intervals are not colored by the same color. 
The cost of an algorithm is the number of colors used to color intervals of $G$ 
and the objective of this problem is to minimize the number of used colors. 
Note that the number of colors used by an optimal offline algorithm is the size of the maximum clique in $G$, 
which is denoted as $\omega(G)$.

{\bf Previous Results and Our Results.}~
For any interval graph $G$, 
Kierstead and Trotter~\cite{KT1981} designed an online algorithm and 
proved the number of colors used by the algorithm is at most $3 \omega(G) - 2$. 
Moreover, 
they showed an instance such that the number of colors by any online algorithm is at least $3 \omega(G) - 2$. 
Thus, their online algorithm is optimal for online interval coloring. 
There are various lengths of intervals in this instance, 
which means that it cannot be applied to the case 
in which the lengths are restricted to one, 
that is, unit interval setting. 
In this paper, for online unit interval coloring, 
we conduct a complete analysis of the performance of their algorithm. 
Specifically, 
we prove the number of colors used by their algorithm is at most $3 \omega(G) - 3$. 
In addition, 
we present an instance for which the number of colors used by their algorithm is $3 \omega(G) - 3$.
Epstein and Levy~\cite{EL2005} showed that 
the number of colors used by {\sc First-Fit} is at most $2 \omega(G) - 1$. 
That is, 
our results show that 
the Kierstead-Trotter algorithm is not optimal for the unit interval coloring.

{\bf Related Results.}~
In online unit interval coloring, 
the current best upper and lower bounds are 
$2 \omega(G) - 1$ and $3 \omega(G) / 2$, respectively 
by Epstein and Levy~\cite{EL2005}. 
Furthermore, 
Chrobak and \'Slusarek~\cite{CS1988} and Epstein and Levy~\cite{EL2005} showed that 
a lower bound on the number of colors used by {\sc First-Fit} is $2 \omega(G) - 1$. 
Hence, 
the performance of {\sc First-Fit} for the unit interval case is tightly analyzed. 
On the other hand, 
recent research on online interval coloring has focused on evaluating the performance of {\sc First-Fit}~
\cite{K1988,CS1988,KQ1995,NB2008,PRV2011,KST2016}, 
which still remains open. 
The current best upper and lower bounds are $8 \omega(G)$~\cite{NB2008,PRV2011} and $5 \omega(G)$~\cite{KST2016} respectively. 
Variants of online interval coloring with some constraints and generalizations of that have been extensively studied as well 
(see e.g. \cite{AE2002,N2004,EL2005,AFLN2006,ELI2008,EEL2009,PRV2011}). 
In addition, 
the max coloring problem, which was proposed by Pemmaraju et~al.\ \cite{PRV2011}, 
is a generalization of the vertex coloring problem. 
They pointed out that 
this problem of an interval graph is closely related to the interval coloring problem, and 
Epstein and Levy~\cite{EL2012} studied the max coloring of interval graphs in an online setting.

\section{Kierstead-Trotter algorithm} \label{KTalg}

In this section, 
we give the definition of the algorithm by Kierstead and Trotter \cite{KT1981}, 
which we analyze in this paper. 
First, 
we give some definitions to define it. 
Let $v_i$ be the $i$th interval which is revealed to the online algorithm. 
The algorithm gives each interval $v$ the value $\ell(v)$, called the {\em level} of $v$, and 
sets its value just before coloring $v$. 
The algorithm colors $v$ based on its level. 
(Levels are initialized to one at the beginning of the first call of the algorithm.)
Let us define 
$V_{x, y}(i) = \{ v_j \in V \mid j \leq i, \hspace{2mm} x \leq \ell(v_j) \leq y \}$, 
$E_{x, y}(i) = \{ (u, v) \in E \mid u, v \in V_{x, y}(i) \}$ and 
$G_{x, y}(i) = (V_{x, y}(i), E_{x, y}(i))$. 
Let $P_{j}$ denote the set of colors dedicated to the graph $G_{j, j}(i)$. 
That is, $P_{j} \cap P_{j'} = \emptyset$ if $j \ne j'$. 
For any interval subgraph $H \subseteq G$ and any interval $v$ in $H$, 
$\omega(H, v)$ denotes the size of the maximum clique containing $v$. 
\noindent\vspace{-1mm}\rule{\textwidth}{0.5mm} 
\vspace{-3mm}
{\bf Kierstead-Trotter algorithm}\\
\rule{\textwidth}{0.1mm}
	{\bf\boldmath Initialize: }
	For each interval $v$, set $\ell(v) := 1$. \\
	%
	Suppose that the $i$th interval $v_i$ is revealed. \\
	%
	%
	{\bf\boldmath Step 1: } 
	Set $\ell(v_i) := \arg \min \{ j \mid \omega( G_{1, j}(i), v_i ) \leq j \}$. 
	\\
	%
	%
	{\bf\boldmath Step 2: } 
	Color $v_i$ considering only $G_{\ell(v_i), \ell(v_i)}(i)$ using {\sc First-Fit} on the colors of $P_{\ell(v_i)}$. 
	\\
\rule{\textwidth}{0.1mm}
%

\section{Analysis} \label{MB}

First, we show an upper bound on the number of colors used by Kierstead-Trotter algorithm. 
Let $n$ be the total number of given intervals. 
The following lemma was shown in Theorem~5 of \cite{KT1981}. 

\begin{lemma} \label{LMA:up.1}
	{\sc First-Fit} colors $G_{1, 1}(n)$ using at most one color. 
	Also for any $j \geq 2$, 
	{\sc First-Fit} colors $G_{j, j}(n)$ using at most three colors. 
\end{lemma}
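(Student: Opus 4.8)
The plan is to read the level-$1$ bound straight off the definition of $\ell$ in Step~1, and for $j \ge 2$ to reconstruct the First-Fit analysis of \cite[Theorem~5]{KT1981}, on which this lemma is based.

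\emph{The case $j = 1$.} First I would show that $G_{1,1}(n)$ has no edge. Suppose $v_a, v_b \in V_{1,1}(n)$ with $a < b$ and $(v_a, v_b) \in E$. At the step where $v_b$ is revealed, both $v_a$ and $v_b$ already lie in $V_{1,1}(b)$ and they are adjacent, so $\omega(G_{1,1}(b), v_b) \ge 2$. Then the value $j = 1$ violates the condition $\omega(G_{1,j}(b), v_b) \le j$ tested in Step~1, so $\ell(v_b) \ge 2$, contradicting $v_b \in V_{1,1}(n)$. Hence $G_{1,1}(n)$ is edgeless and {\sc First-Fit} colors it with a single color.

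\emph{The case $j \ge 2$.} Here the plan is to follow {\sc First-Fit} on the vertices of $G_{j,j}(n)$ in their order of revelation and argue by contradiction that a fourth color is never used. Suppose some interval $v$ of level $j$ receives a color $\ge 4$, and let $t$ be the step at which $v$ is revealed. Then $v$ is adjacent, in $G_{j,j}(t)$, to three previously revealed intervals of level $j$ carrying the colors $1$, $2$ and $3$. On the other hand, since $\ell(v) = j$, Step~1 gives $\omega(G_{1,j}(t), v) \le j$, while the rejection of the value $j-1$ supplies a clique $W$ of $j-1$ intervals of level at most $j-1$ present at step $t$ such that $W \cup \{v\}$ is a clique. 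I would then combine the three level-$j$ neighbours of $v$ with $W \cup \{v\}$ — inspecting, with the help of the unit-length hypothesis, how those neighbours sit relative to a common point of the clique $W \cup \{v\}$ — to exhibit inside $G_{1,j}(t)$ a clique through $v$ of size larger than $j$, contradicting $\omega(G_{1,j}(t), v) \le j$. Thus {\sc First-Fit} never reaches a fourth color on $G_{j,j}(n)$.

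\emph{Main obstacle.} The delicate step is this last combination in the case $j \ge 2$: an $x$ among the three lower-colored neighbours of $v$ that meets $v$ need not meet every interval of $W$, so one must argue carefully which neighbours can actually be spliced onto $W \cup \{v\}$ (and exploit that two neighbours lying on the same side of the common point are forced to intersect each other) before the size of the clique can be pushed past $j$. Making this precise is the technical core of \cite[Theorem~5]{KT1981}; by contrast the level-$1$ part and the manipulations of $V_{x,y}(i)$ and $G_{x,y}(i)$ are routine bookkeeping.
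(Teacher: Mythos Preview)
The paper does not give its own proof of this lemma at all: it simply records the statement and cites Theorem~5 of \cite{KT1981}. So there is no in-paper argument to compare against, and your task was really to sketch a proof of a known result.

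Your $j=1$ case is correct and is exactly the routine verification one expects.

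For $j\ge 2$ your plan is in the right spirit but differs from the standard Kierstead--Trotter route and has a soft spot. The classical argument does \emph{not} use the unit-length hypothesis --- Theorem~5 of \cite{KT1981} is stated and proved for arbitrary interval graphs --- so invoking unit length here is a red herring (and the paper only exploits unit length later, in Lemma~\ref{LMA:up.2}). More substantively, the usual proof does not try to graft the three First-Fit witnesses $u_1,u_2,u_3$ directly onto the clique $W\cup\{v\}$ coming from $\ell(v)=j$. Instead it proves a structural fact about $G_{j,j}$ itself --- namely that any level-$j$ interval has at most two earlier level-$j$ neighbours (equivalently, $\omega(G_{j,j})\le 2$) --- by using the witness cliques attached to the $u_i$'s (not to $v$). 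From that structural fact the First-Fit bound of three colors is immediate. Your ``splicing'' picture runs into exactly the difficulty you flag: the common point $p$ of $W\cup\{v\}$ need not lie in any $u_i$, so none of the $u_i$ can be appended to $W\cup\{v\}$ directly, and the argument stalls unless you bring in the witness cliques for the $u_i$'s as well. Once you do that you are essentially reproducing the KT structural lemma, so it is cleaner to aim for that lemma from the start.
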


\begin{lemma} \label{LMA:up.2}
	{\sc First-Fit} colors $G_{2, 2}(n)$ using at most two colors. 
\end{lemma}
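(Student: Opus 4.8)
The plan is to show directly that First-Fit (as used in Step~2, i.e.\ restricted to the palette $P_2$) never assigns a third colour to any level-$2$ interval, which is exactly the claim. Everything will follow from unpacking Step~1: $\ell(v)=2$ is equivalent to the conjunction of (i) $\omega(G_{1,1}(i_v),v)\ge 2$, i.e.\ $v$ intersects some level-$1$ interval revealed before $v$; and (ii) $\omega(G_{1,2}(i_v),v)\le 2$, i.e.\ there is no triangle $\{v,u,u'\}$ in which $u,u'$ have level $1$ or $2$ and are revealed no later than $v$. Property (i) provides a level-$1$ ``witness'' close to $v$, and property (ii) is the tool that will forbid every bad configuration; I expect not to need Lemmas~\ref{LMA:up.1} at all.

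The core step: assume for contradiction that First-Fit uses at least three colours on $G_{2,2}(n)$, and let $v$ be a level-$2$ interval receiving a colour $\ge 3$. Then, when $v$ is coloured, colours $1$ and $2$ are both present among the level-$2$ intervals revealed before $v$ that intersect $v$; fix such intervals $a$ (colour $1$) and $b$ (colour $2$), so $a\ne b$. Since $a,b,v$ all have level $2$ and $a,b$ were revealed before $v$, property (ii) forbids $\{v,a,b\}$ from being a triangle, hence $a\cap b=\emptyset$. Because $a$ and $b$ are unit intervals, both meeting the unit interval $v$ but disjoint from each other, after normalizing $v=[0,1]$ and relabelling so that $a$ has the smaller left endpoint (the colours play no further role), one gets $a=[\alpha,\alpha+1]$ and $b=[\beta,\beta+1]$ with $-1\le\alpha<0<\beta\le 1$ and $\beta>\alpha+1$; in particular $a$ overlaps the left part of $v$ and $b$ its right part.

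To finish, invoke property (i): let $w=[\lambda,\lambda+1]$ be a level-$1$ interval with $w\cap v\ne\emptyset$ and $w$ revealed before $v$, so $\lambda\in[-1,1]$. Both triples $\{v,a,w\}$ and $\{v,b,w\}$ consist of intervals of level $\le 2$ revealed no later than $v$, with $w$ meeting $v$, so property (ii) gives $w\cap a=\emptyset$ and $w\cap b=\emptyset$. Disjointness of $w$ and $a$, together with $\lambda\ge-1$ and $\alpha<0$, rules out $\lambda<\alpha-1$ and forces $\lambda>\alpha+1\ge 0$, hence $\lambda>0$; disjointness of $w$ and $b$, together with $\lambda\le 1$ and $\beta>0$, rules out $\lambda>\beta+1$ and forces $\lambda<\beta-1\le 0$, hence $\lambda<0$. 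This contradiction shows no such $v$ exists.

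The one non-obvious point is the choice in the last step: it is the level-$1$ witness of $v$ \emph{itself} (guaranteed by (i)) that yields the contradiction in a single stroke, rather than the level-$1$ witnesses of $a$ and $b$ or some inductive chain of colour-$2$ predecessors of $b$; once this is seen, the rest is elementary interval arithmetic, where the only care needed concerns whether touching unit intervals are counted as intersecting (the argument above is written for the ``share a point'' convention, but goes through verbatim, up to swapping a few strict and non-strict inequalities, for the other convention). I would also record the incidental observation, implicit above, that by (ii) any two level-$2$ neighbours of $v$ revealed before $v$ are disjoint, so $v$ has at most two of them, one on each side --- which is why colour $3$ is the only possibility that needs to be excluded.
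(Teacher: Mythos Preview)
Your proof is correct and takes essentially the same route as the paper's: both argue that a level-$2$ interval $v$ meeting two other level-$2$ intervals forces those two to be disjoint (hence one on each side of $v$), and then obtain a contradiction from the level-$1$ witness of $v$, which must meet $v$ while being disjoint from both neighbours. The only cosmetic differences are that the paper phrases the geometry via endpoint containment rather than coordinates and states the slightly stronger conclusion that $G_{2,2}(n)$ has maximum degree at most one.
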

\begin{proof}
	We prove by contradiction that for any interval $v \in V_{2, 2}(n)$, 
	the number of intervals in $V_{2, 2}(n)$ which intersect $v$ is at most one. 
	We assume that $v$ intersects two intervals $v'$ and $v''(\ne v') \in V_{2, 2}(n)$. 
	Since the lengths of $v, v'$, and $v''$ are unique, 
	the left and right endpoints of $v$ are contained in either $v'$ or $v''$. 
	Then, it is clear that 
	the endpoint which $v'$ contains is different from the endpoint which $v''$ contains. 
	That is, 
	the right (left) endpoint of $v$ is included in a clique whose size is two in $G_{1,2}(n)$. 
	Thus, 
 	no interval exists in $V_{1,1}(n)$ which contains the right (left) endpoint of $v$.  
	On the other hand, 
	$v$ intersects an interval in $V_{1,1}(n)$ by the definition of the algorithm because the level of $v$ is two. 
	Then, 
	any unit interval intersecting $v$ certainly contains at least one of the endpoints of $v$, 
	which contradicts the above fact. 
\end{proof}

\begin{theorem}\label{thm:1}
	For any unit interval graph $G$, 
	the number of colors used by the Kierstead-Trotter algorithm is at most $3 \omega(G) - 3$. 
\end{theorem}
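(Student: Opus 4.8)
The plan is to exploit the level structure of the algorithm. Since the palettes $P_1,P_2,\dots$ are pairwise disjoint and, by Step~2, the restriction of the coloring to the level-$j$ intervals is exactly \textsc{First-Fit} run on $G_{j,j}(n)$, the total number of colors used equals the number of colors \textsc{First-Fit} uses on $G_{1,1}(n)$, plus that on $G_{2,2}(n)$, plus that on each $G_{j,j}(n)$ with $j\ge 3$. By Lemma~\ref{LMA:up.1}, $G_{1,1}(n)$ contributes at most one color and each $G_{j,j}(n)$ with $j\ge 2$ contributes at most three; by Lemma~\ref{LMA:up.2}, $G_{2,2}(n)$ contributes at most two. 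Hence, writing $L$ for the largest level assigned to any interval, the algorithm uses at most $1+2+3(L-2)=3L-3$ colors whenever $L\ge 2$.

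It remains to bound $L$, and the claim is that $\ell(v)\le\omega(G)$ for every interval $v$. Take $v=v_i$ and put $j:=\ell(v_i)$. If $j=1$ this is immediate for nonempty $G$. If $j\ge 2$, then by the minimality in Step~1 the value $j-1$ fails the defining inequality, so $\omega(G_{1,j-1}(i),v_i)\ge j$; but $G_{1,j-1}(i)$ is an induced subgraph of $G$ containing $v_i$, so any clique through $v_i$ in $G_{1,j-1}(i)$ is a clique through $v_i$ in $G$, whence $j\le\omega(G_{1,j-1}(i),v_i)\le\omega(G,v_i)\le\omega(G)$. Therefore $L\le\omega(G)$.

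Combining the two parts: when $\omega(G)\ge 2$ we have $L\le\omega(G)$, and also $L\ge 2$, since $L=1$ would mean that no two level-$1$ intervals intersect (otherwise the later of them would have been forced to level at least two in Step~1), forcing $\omega(G)=1$ against the assumption. Consequently the number of colors used is at most $3L-3\le 3\omega(G)-3$, which is the assertion. (The degenerate case $\omega(G)=1$, where all intervals are pairwise disjoint and one color suffices, is outside the intended range and can be stated separately.)

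Most of the real work is already contained in the two preliminary lemmas; in particular it is the two-color bound of Lemma~\ref{LMA:up.2} that sharpens the original Kierstead--Trotter estimate $3\omega(G)-2$ to $3\omega(G)-3$, while Lemma~\ref{LMA:up.1} supplies the single color at level one. The only point above that needs care is the inequality $\ell(v)\le\omega(G)$: one must read the minimality of Step~1 \emph{backwards} to extract $\omega(G_{1,j-1}(i),v_i)\ge j$, and then lift a clique in the subgraph $G_{1,j-1}(i)$ up to $G$; everything else is bookkeeping over the levels.
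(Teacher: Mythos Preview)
Your proof is correct and follows essentially the same approach as the paper: sum the \textsc{First-Fit} color counts over the levels using Lemmas~\ref{LMA:up.1} and~\ref{LMA:up.2}, and bound the top level by $\omega(G)$. The paper's version is terser---it simply writes $1+2+3(\omega(G)-2)=3\omega(G)-3$ and stops, taking the bound $L\le\omega(G)$ as known from~\cite{KT1981} and leaving the degenerate case $\omega(G)=1$ implicit---whereas you spell out the minimality argument for $\ell(v)\le\omega(G)$ and treat the edge cases explicitly, but the underlying argument is identical.
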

\begin{proof}
	By Lemmas~\ref{LMA:up.1} and \ref{LMA:up.2}, 
	{\sc First-Fit} colors $G_{1, 1}(n)$ and $G_{2, 2}(n)$ using at most one color and at most two colors, respectively. 
	Furthermore, for any $j \geq 3$, 
	{\sc First-Fit} colors $G_{j, j}(n)$ using at most three colors. 
	Therefore, 
	$1 + 2 + 3(\omega(G) - 2) = 3 \omega(G) - 3$, 
	which completes the proof. 
\end{proof}
Next, we show a lower bound on the number of colors used by the Kierstead-Trotter algorithm. 

\begin{theorem}\label{thm:2}
	There exists an instance which gives a graph $G$ 
	such that the number of colors used by the Kierstead-Trotter algorithm is $3 \omega(G) - 3$. 
\end{theorem}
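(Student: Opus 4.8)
The plan is to construct, for each target clique size $\omega$, an adversarial sequence of unit intervals that forces the Kierstead-Trotter algorithm to use exactly $3\omega - 3$ colors, thereby matching the upper bound of Theorem~\ref{thm:1}. The guiding idea is to build the instance in "layers" corresponding to the levels $\ell(\cdot)$ assigned by Step~1: a sub-instance that fills level $1$ completely (one color), then a sub-instance that forces level $2$ to consume two colors, and then $\omega - 2$ further sub-instances, each forcing a fresh level $j \ge 3$ to consume three colors. Since the color palettes $P_j$ are pairwise disjoint, these contributions add, giving $1 + 2 + 3(\omega - 2) = 3\omega - 3$.

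First I would recall precisely how Step~1 assigns levels: an interval $v_i$ gets the least $j$ with $\omega(G_{1,j}(i), v_i) \le j$, so to push an interval up to level $j$ I must, at the moment it arrives, have already presented intervals forming a clique of size $j$ through the point it covers among intervals of levels $\le j-1$, while keeping the clique among levels $\le j$ (including $v_i$) of size at most $j$. The standard Kierstead-Trotter-style gadget for this is a carefully staggered "ladder": at a fixed location $x$ one presents intervals in an order that makes First-Fit on the single palette $P_j$ behave badly (using all three colors available to a level $\ge 3$), analogous to the classical three-point construction showing First-Fit needs three colors on an interval graph of clique number $1$ restricted to a level class. I would present the level-$1$ block first (a single interval, or a chain forcing one color), then the level-$2$ block — here I must be careful, because Lemma~\ref{LMA:up.2} shows level $2$ can only ever host a path, so the block forcing two colors on $P_2$ is just two overlapping unit intervals at a point already covered by a level-$1$ interval. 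Then for each $j$ from $3$ up to $\omega$, I append a block that (a) locally raises the ambient clique among lower levels to $j-1$ so the new intervals are forced to level $j$, and (b) arranges three level-$j$ intervals pairwise overlapping in the classic First-Fit-forcing order so that $P_j$ is forced to three colors, while (c) never creating a clique of size exceeding $\omega$ in $G$ overall.

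The key bookkeeping is geometric: I must place the $\omega$ blocks along the real line far enough apart that intervals from different blocks never intersect, yet within each block reuse a common "pivot" region where the lower-level intervals from all earlier blocks stack up — actually it is cleaner to stack all the level-raising scaffolding at essentially one location, so that at the pivot point the true clique number is exactly $\omega$ (one interval per level, or the handful needed per level), and verify that no other point of $G$ has a larger clique. I would write out the endpoint coordinates explicitly for small $\omega$, then give the general recursive placement and check two invariants by induction on the number of blocks processed: (i) after block $j$ is revealed, every interval in that block has been assigned level exactly $j$, and (ii) $\omega(G) = \omega$ throughout.

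The main obstacle I anticipate is item (c) combined with the First-Fit ordering inside each level: I need the three level-$j$ intervals to arrive in an order — and with an overlap pattern — that genuinely forces First-Fit to open a third color on $P_j$, without those same three intervals (together with the scaffolding needed to lift them to level $j$) inflating some clique in $G_{1,j}$ past $j$ and thereby bumping part of the gadget up to level $j+1$ and breaking the layer structure. Reconciling "enough overlap among lower levels to force level $j$" with "not too much overlap so the level-$j$ intervals stay at level $j$" is the delicate point, and I expect the verification of the level assignments (Step~1) for the level-$j$ block to be the technical heart of the argument; everything after that is the routine summation already carried out in the proof of Theorem~\ref{thm:1}.
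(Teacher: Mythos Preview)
Your high-level decomposition $1 + 2 + 3(\omega - 2)$ and the idea of forcing each level $j \ge 3$ to spend three First-Fit colors is exactly the paper's plan, but the gadget you propose for the level-$j$ block contains a genuine error. You write that you will ``arrange three level-$j$ intervals pairwise overlapping in the classic First-Fit-forcing order'': this is impossible. The Kierstead--Trotter analysis behind Lemma~\ref{LMA:up.1} rests on the fact that $\omega(G_{j,j}(n)) \le 2$ for every $j \ge 2$; three pairwise-overlapping unit intervals form a triangle (Helly), so they cannot all receive the same level~$j$. Hence no ordering of three such intervals exists inside a single level class, and your block as described cannot be built. (The aside about ``clique number~$1$'' is also garbled --- a level class has clique number at most~$2$, not~$1$.)

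What actually forces three colors on $P_j$ is a \emph{path} of at least four level-$j$ intervals presented out of left-to-right order: if they arrive in positional order $1,2,4,3$, the last one sees neighbours already coloured $2$ and $1$ and must take colour~$3$. The paper does exactly this. It fixes a step $1 - \tfrac{1}{x}$, lays down a long chain of unit intervals so that consecutive ones overlap by $\tfrac{1}{x}$, and shows that the $j$-th layer of this staircase lands at level $j+1$; the swap $i_3 = 4$, $i_4 = 3$ in the presentation order then triggers the third colour at each such level. Your ``blocks far apart / single pivot'' dichotomy is therefore also off target: the working construction is a single staggered staircase, and the level verification is a direct clique count in the width-$\tfrac{1}{x}$ overlap window. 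Once you replace the impossible triangle gadget with this four-interval path gadget and commit to a concrete overlap parameter, the remainder of your outline (inductive level bookkeeping and the final summation) goes through as in the paper.
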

\begin{proof}
	An instance which gives graph $G$ is constructed as follows. 
	Let $x \geq 3$ be an integer. 
	For each $i = 1, \ldots, x+2$, 
	a unit interval is revealed whose left endpoint is located at $(i-1)(1-\frac{1}{x})$. 
	Then, 
	for an interval $v$ which is given when $i$ is even, 
	$v$ is included in a clique whose size is two in the range $[ (i-1)(1-\frac{1}{x}), (i-1)(1-\frac{1}{x})+\frac{1}{x} ]$ just after $v$ is given. 
	Thus 
	the level of $v$ is two. 
	Next, 
	for each $a = 1,2, \ldots$, 
	we define $i_a = a$ if $a \ne 3,4$, 
	$i_3 = 4$, and 
	$i_4 = 3$. 
	Then, 
	for each $j = 2, \ldots, x-1$, 
	in the order of $i = i_1, i_2, \ldots, i_{x - j + 3}$, 
	an interval $v'$ whose left endpoint is located at $(i-1)(1-\frac{1}{x}) + \frac{j-1}{x}$ is revealed. 
	Then, 
	$v'$ is included in a clique whose size is $j+1$ in the range $[ i (1-\frac{1}{x}) + \frac{j-2}{x}, i (1-\frac{1}{x})+\frac{j-1}{x} ]$ 
	just after $v'$ is revealed, 
 	which means that the level of $v'$ is $j+1$. 
	In addition, 
	the first, second, and fourth intervals with level $j+1$ have different colors when using {\sc First-Fit}, namely, 
	{\sc First-Fit} uses three colors for level $j+1$. 
	Finally, 
	four intervals whose left endpoints are located at 
	$(x + 1)(1-\frac{1}{x}) + 1 + \frac{1}{x}$, 
	$(x + 1)(1-\frac{1}{x}) + 2 + \frac{2}{x}$, 
	$(x + 1)(1-\frac{1}{x}) + 1 + \frac{2}{x}$ and 
	$(x + 1)(1-\frac{1}{x}) + 2 + \frac{1}{x}$ 
	are revealed one by one in this order. 
	The levels of the latter two intervals are two, and 
	{\sc First-Fit} uses two colors for them. 
	By the above argument, 
	the maximum level of given intervals is $x$, 
	that is, 
	the size of the maximum clique in $G$ is $x$. 
	Therefore, 
	the total number of colors used by the Kierstead-Trotter algorithm is $1 + 2 + 3(x-2) = 3x-3 = 3 \omega(G) - 3$. 
\end{proof}
%



\end{document}